 \newtheorem{thm}{Theorem}[section]
 \newtheorem{prop}[thm]{Proposition}
 \newtheorem{lem}[thm]{Lemma}
  \newcommand{\f}{\mathbb{F}_{q}}
  \newcommand{\fl}{\mathbb{F}_{q^{l}}}
\begin{document}
\title{Security Analysis on `` An Authentication Code Against Pollution Attacks in Network Coding''}
\author{Jun Zhang, Xinran Li and Fang-Wei Fu}
\address{Chern Institute of Mathematics, Nankai University, Tianjin, P.R. China}
\email{zhangjun04@mail.nankai.edu.cn; xinranli@mail.nankai.edu.cn; fwfu@nankai.edu.cn}
\thanks{This research is supported by the National Key Basic Research Program of China (Grant No. 2013CB834204), and the National Natural Science Foundation of
China (Nos. 61171082, 10990011 and 60872025). The author Jun Zhang is also supproted by the Chinese Scholarship Council under the State Scholarship Fund during visiting University of California, Irvine.}
\date{}
 \maketitle
\begin{abstract}
We analyze the security of the authentication code against pollution attacks in network coding given by Oggier and Fathi~\cite{oggier} and show one way to remove one very strong condition they required. Actually, we find a way to attack their authentication scheme. In their scheme, they considered that if some malicious nodes in the network collude to make pollution in the network flow or make substitution attacks to other nodes, they thought these malicious nodes must solve a system of linear equations to recover the secret parameters. Then they concluded that their scheme is an unconditional secure scheme. Actually, note that the authentication tag in the scheme of Oggier and Fathi is nearly linear on the messages, so it is very easy for any malicious node to make pollution attack in the network flow, replacing the vector of any incoming edge by linear combination of his incoming vectors whose coefficients have sum $1$. And if the coalition of malicious nodes can carry out decoding of the network coding, they can easily make substitution attack to any other node even if they do not know any information of the private key of the node. Moreover, even if their scheme can work fruitfully, the condition in their scheme $H\leqslant M$ in a network can be removed, where $H$ is the sum of numbers of the incoming edges at adversaries. Under the condition $H\leqslant M$, $H$ may be large, so we need large parameter $M$ which increases the cost of computation a lot. On the other hand, the parameter $M$ can not be very large as it can not exceed the length of original messages.

\end{abstract}
\section{Introduction}
Network coding is a novel technique to achieve the maximum multicast throughput, which was introduced by Ahlswede {\it et al.} \cite{ACL}. It allows the intermediate node to generate output data by mixing its received data. In 2003, Li {\it et al.} \cite{LY} further showed that linear network coding is sufficient to achieve the optimal throughput in multicast networks. Subsequently, Ho {\it et al.} \cite{HKM} introduced the concept of random linear network coding, and proved that it achieves the maximum throughput of multicast network with high probability. Network coding is efficiently applicable to numerous forms of network communications, such as Internet TV, wireless networks, content distribution networks and P2P networks. Due to these advantages, network coding attracts many researchers and has developed very quickly.

However, networks using network coding impose security problems that traditional networks do not face. A particularly important problem is the pollution attack. If some nodes in the network are malicious and inject corrupted packets into the information flow, then the honest intermediate node mix invalid packet with other packets.
 According to the rule of network coding, the corrupted outgoing packets quickly pollute the whole network and cause all the messages to be decoded wrongly in the destination.

Recently several related works are proposed to address the pollution attack, such as homomorphic hashing, digital signature and message authentication code (MAC). Krohn {\it et al.} \cite{KFM} (see also \cite{GR}) used homomorphic hashing function to prevent pollution attacks. Yu {\it et al.} \cite{YWR} proposed a homomorphic signature scheme based on discrete logarithm and RSA, which however was showed insecurely by Yun {\it et al.} \cite{YCK}. Charles {\it et al.} \cite{CJL} gave a signature scheme based on Weil pairing over elliptic curves and provided authentication of the data in addition to detecting pollution attacks. Zhao {\it et al.} \cite{ZKM} designed a signature scheme that view all blocks of the file as vectors and make use of the fact that all valid vectors transmitted in the network should belong to the subspace spanned by the original set of vectors from the file. Boneh {\it et al.} \cite{BFK} proposed two signature schemes that can be used in conjunction with network coding to prevent malicious modification of messages, and they showed that their constructions had a lower signature length compared with related prior work. Boneh {\it et al.} \cite{BF} constructed a linearly homomorphic signature scheme that authenticates vectors with coordinates in the binary field $\mathbb{F}_{2}$. It is the first such scheme based on the hard problem of finding short vectors in integer lattices. Agrawal and Boneh \cite{AB} designed a homomorphic MAC system that allows checking the integrity of network coded data. These works provide computational security (i.e., the attacker's resources are limited) in network coding.

Besides digital signatures and MACs, authentication codes also satisfy the properties of authentication. However, authentication code provides unconditional security (i.e., the attacker has unlimited computational power). In the multi-receiver authentication model, a sender broadcasts an authenticated message such that all the receivers can independently verify the authenticity of the message with their own private keys. It requires a security that malicious groups of up to a given size of receivers can not successfully impersonate the transmitter, or substitute a transmitted message. Desmedt {\it et al.} \cite{DFY} gave an authentication scheme of single message for multi-receivers. Safavi-Naini and Wang \cite{SW} extended the DFY scheme \cite{DFY} to be an authentication scheme of multiple messages for multi-receivers. Note that their construction was not linear over the base field with respect to the message. Oggier and Fathi \cite{OF,oggier} made a little modification of the construction so that the construction can be used for network coding, which is actually not secure we will show in this paper. Tang \cite{Tang} used homomorphic authentication codes to sign a subspace which provide an unconditionally security. In fact, Tang in the same paper had noticed that linear authentication codes for linear network is not secure, so he modified the type of substitution attack.

Firstly, we recall the general model of network coding and the definition of subspace codes.
In the basic multicast model for linear network coding, a source node $s$ generates $n$ messages, each consisting of $m$ symbols in the base field $\f$. Let $\{x_{1},x_{2},\ldots,x_{n}\}\subseteq\f^{l\times 1}$ represent the set of messages. Based on the messages, the source node $s$ transmits a message over each outgoing channel. At a node in the network, the symbols on its outgoing channel are $\f$-linear combinations of incoming symbols. For a node $i$, define $Out(i)=\{e\in E : e $ is an outgoing channel of $i\}$, and $In(i) = \{e \in E : e $ is an incoming channel of $i\}$. If the channel $e$ of network carries packet $y(e)$, where $e\in Out(i)$, and $i$ is an internal nodes, then $y(e)$ satisfies $y(e)=\sum_{d\in in(i)}k_{de}y(d)$. The $|In(i)|\times |Out(i)|$ matrix $K_{i}=[k_{de}]_{d\in in(i), e\in Out(i)}$ is called the \emph{local encoding kernel at node} $i$. Note that each $y(e)$ is a linear combination of the messages sent by the source node, so there exists a vector $f_{e}\in \f^{1\times n}$ such that
\begin{equation*}
    y(e)= f_{e}\underline{\mathbf{X}},\,\mbox{where}\,\underline{\mathbf{X}}=\left(
                                                                               \begin{array}{c}
                                                                                 x_1 \\
                                                                                 x_{2}\\
                                                                                 \vdots\\
                                                                                 x_{n} \\
                                                                               \end{array}
                                                                             \right)\ .
\end{equation*}
The vector $f_{e}$ is called the \emph{global encoding vector} of channel $e$. Given the local encoding kernels for all the channels in network, the global encoding kernels can be calculated recursively in any upstream-to-downstream order as follows
$$f_e=\sum_{d\in in(i)}k_{de}f_d\ .$$
 Write the received vectors at a node $t$ as a column vector
 \begin{equation*}
    {A_{t}=(y(e)\ : e\in In(t))^{T}}=\left(
                                       \begin{array}{c}
                                         y(e_{1}) \\
                                         y(e_2) \\
                                         \vdots\\
                                         y(e_{e(t)}) \\
                                       \end{array}
                                     \right)\ ,
 \end{equation*}
 where $In(t)=\{e_{1},e_2,\cdots,e_{e(t)}\}$.
Then we have the decoding equation at the node $t$
 $$ { F_{t}\cdot\mathbf{\underline{X}}=A_{t}\ ,}$$
where
 \begin{equation*}
   F_{t}=(f_{e}: e\in In(t))^{T}=\left(
                                       \begin{array}{c}
                                         f_{e_{1}} \\
                                         f_{e_2} \\
                                         \vdots\\
                                         f_{e_{e(t)}} \\
                                       \end{array}
                                     \right)
 \end{equation*}
is called the \emph{global encoding kernel} at the node $t$.

\section{The Authentication Scheme of Oggier and Fathi}
Oggier and Fathi constructed an authentication code against pollution and substitution attacks in network coding, and they proved that the scheme is unconditional secure under some condition. Let us recall their construction and their result about the security analysis.
\begin{itemize}
  \item \textbf{Key generation:} A trusted authority randomly generates $M+1$ polynomials $P_{0}(x),P_1(x),$ $\cdots,P_{M}(x)\in \fl[x]$ and choose $V$ distinct values $x_{1},\cdots,x_V\in \fl$. These polynomials are of degree $k-1$, and we denote them by
      \[
       P_{i}(x)=a_{i,0}+a_{i,1}x+a_{i,2}x^2+\cdots+a_{i,k-1}x^{k-1},\quad i=0,1,\cdots,M\ .
      \]
  \item \textbf{Key distribution:} The trusted authority gives as private key to the source $S$ the $M+1$ polynomials $(P_{0}(x),\cdots,P_{M}(x))$, and as private key for each verifier $R_i$ the $M+1$ valuations of polynomials at $x=x_i$, namely $(P_{0}(x_i),\cdots,P_{M}(x_i))$, $i=1,2,\cdots,V$. The values $x_{1},\cdots,x_V$ are made public. The keys can be given to the nodes when they sign up for a service protected by this scheme.
  \item \textbf{Authentication tag:} Let us assume that the source wants to send $n$ data messages $s_1,s_2,\cdots,s_n\in\f^l$. Choose and fix an $\f$-linear isomorphism between $\f^l$ and $\fl$, then consider they have the same elements. The source computes the following polynomial in $\fl[x]$:
      \[
        A_{s_i}(x)=P_{0}(x)+s_iP_{1}(x)+s_i^{q}P_{2}(x)+\cdots+s_i^{q^{M-1}}P_{M}(x)
      \]
      which forms the authentication tag of each $s_i$, $i=1,\cdots,n$. Instead of sending the original messages $s_1,s_2,\cdots,s_n$, the source actually sends packets $\vec{x}_i$ of the form
      \[
        \vec{x}_i=[1,s_i,A_{s_i}(x)]\in\f^{1+l+kl},\qquad i=1,\cdots,n\ .
      \]
\end{itemize}

The security of the authentication scheme above proven by Oggier and Fathi is as follows:
\begin{prop}[\cite{oggier}]
Consider a multicast network implementing linear network coding, among which nodes $V$ of them are verifying nodes owning a private key for authentication.
The above scheme is an unconditionally secure network coding authentication code against a coalition of up to $k-1$ adversaries, possibly among the verifying nodes, in which every key can be used to authentication up to $M$ messages, under the assumption that $H\leqslant M$, where $H$ is the sum of numbers of the incoming edges at each adversary.
\end{prop}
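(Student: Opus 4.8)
The plan is to run the usual information-theoretic (Simmons-style) argument for unconditionally secure authentication, exploiting that the local coding operations respect the algebraic identity built into the tag. Write $P_i(x)=\sum_{m=0}^{k-1}a_{i,m}x^m$ and set $\vec a_m=(a_{0,m},\dots,a_{M,m})\in\fl^{M+1}$; since the $P_i$ are independent and uniform of degree $\le k-1$, the vectors $\vec a_0,\dots,\vec a_{k-1}$ are i.i.d.\ uniform on $\fl^{M+1}$. For a message $(c_0,c_1)\in\f\times\fl$ put $\vec c=(c_0,c_1,c_1^{q},\dots,c_1^{q^{M-1}})$, and for a verifier $R_t$ put $Y_t=(P_0(x_t),\dots,P_M(x_t))=\sum_m x_t^{m}\vec a_m$. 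The first step is to record the invariant that makes the scheme work: by additivity of the Frobenius and $\gamma^{q^{j}}=\gamma$ for $\gamma\in\f$, any $\f$-linear combination of packets of the form $[\,c_0,c_1,\ c_0P_0+\sum_{j=1}^{M}c_1^{q^{j-1}}P_j\,]$ is again of that form, so this is the shape of every packet that can legitimately appear on an edge, and $R_t$'s verification reduces to the single test $T(x_t)=\langle\vec c,Y_t\rangle$. Hence a coalition $\mathcal A$ with $|\mathcal A|\le k-1$ sees exactly: its own keys $\{Y_a:a\in\mathcal A\}$, and, on each of its $H$ incoming edges, a polynomial $T^{(e)}$ with known $(c_0^{(e)},c_1^{(e)})$, which discloses the scalars $\langle\vec c^{(e)},\vec a_m\rangle$ for all $m$. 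To make $R_t$ accept a forged $[c_0^{*},c_1^{*},T^{*}]$ the coalition must hit $T^{*}(x_t)=\langle\vec c^{*},Y_t\rangle=\sum_m x_t^{m}\langle\vec c^{*},\vec a_m\rangle$; since $T^{*}$ is free, this amounts to guessing the single field element $\langle\vec c^{*},Y_t\rangle$, so it suffices to prove that this element is uniform on $\fl$ given $\mathcal A$'s view.

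Two ingredients then combine, and I would present them in this order. First, \emph{freshness of $\vec c^{*}$}: if the forged message $c_1^{*}$ is new relative to the messages $c_1^{(e)}$ on $\mathcal A$'s incoming edges, then $\vec c^{*}\notin\mathrm{span}_{\fl}\{\vec c^{(e)}\}$ --- and this is exactly where $H\le M$ is used, since there are at most $H\le M<M+1=\dim\fl^{M+1}$ of the $\vec c^{(e)}$, and a Moore/Vandermonde-in-Frobenius determinant (nonzero for a suitably independent family of arguments of size $\le l$) keeps $\vec c^{*}$ out of their span. Granting this, the scalars $u_m:=\langle\vec c^{*},\vec a_m\rangle$ are \emph{not} pinned down by the tag data: the $\vec a_m$ are independent and uniform, and $\langle\vec c^{*},\cdot\rangle$ is a functional outside the span of the $\langle\vec c^{(e)},\cdot\rangle$, so $(u_0,\dots,u_{k-1})$ is uniform on $\fl^{k}$ conditioned on the tags. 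Second, \emph{interpolation}: the key data reveals only the $k-1$ scalars $\sum_m x_a^{m}u_m$ for $a\in\mathcal A$; since $\{x_a:a\in\mathcal A\}\cup\{x_t\}$ are $k$ distinct points, the attached $k\times k$ Vandermonde system is invertible, so $\langle\vec c^{*},Y_t\rangle=\sum_m x_t^{m}u_m$ is still uniform on $\fl$. The pure impersonation case ($H=0$, no message observed) is the subcase that needs only the interpolation step.

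Combining these yields: against any coalition of at most $k-1$ nodes that observes at most $M$ authentication tags, every honest verifier accepts a fresh forgery with probability at most $q^{-l}$; and the hypothesis $H\le M$ is precisely what makes the ``observes at most $M$ tags'' condition automatic, which is also why each key authenticates up to $M$ messages. I expect the genuinely delicate step to be the first ingredient --- the claim that a message the coalition ``should not be able to forge'' always produces $\vec c^{*}$ outside the span of the observed $\vec c^{(e)}$. The Frobenius structure makes this true for forgeries that are not $\f$-affine combinations of observed packets, but an $\f$-linear combination of the incoming packets with coefficient-sum $1$ produces a $\vec c^{*}$ \emph{inside} that span, hence a polluted-yet-accepted packet; so what the argument really certifies is only resistance to that restricted class of forgeries, which is the gap exploited in the rest of this paper.
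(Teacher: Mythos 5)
You should first be aware that this paper contains no proof of the proposition: it is quoted from Oggier--Fathi \cite{oggier} precisely in order to be refuted, so the only thing to compare your argument against is the paper's counterexample. Your reconstruction of the intended argument is a fair rendering of the standard information-theoretic template (reduce verification at $R_t$ to the single test $T(x_t)=\langle\vec c,Y_t\rangle$, then argue uniformity of $\langle\vec c^{*},Y_t\rangle$ given the coalition's view via a freshness step plus Vandermonde interpolation on the $x_a$), and the interpolation half is sound. But the gap you flag in your last sentence is not a delicate step awaiting a patch --- it is fatal, and it is exactly the content of Section 3 of this paper. Because $c_1\mapsto(c_1,c_1^{q},\dots,c_1^{q^{M-1}})$ is $\f$-linear, any $\f$-linear combination of observed packets with coefficients summing to $1$ yields a forged packet $[1,s'_n,A_{s'_n}(x)]$ whose $\vec c^{*}$ lies inside the span of the observed $\vec c^{(e)}$, and every verifier accepts it with probability $1$; the displayed computation in Section 3 checks precisely this identity. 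So the ``freshness'' premise excludes exactly the forgeries that matter for network coding, the proposition as stated is false, and no argument along your lines (or any other) can close the gap.

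A secondary point of comparison: the paper's Lemma~\ref{keylem} targets a different weakness than the one your sketch isolates. Oggier and Fathi's own proof route is not the uniformity-of-$\langle\vec c^{*},Y_t\rangle$ argument but a rank count on the linear system (\ref{equation}) satisfied by the secret matrix $A=(a_{i,j})$, with $H\leqslant M$ invoked to keep that system underdetermined. The lemma shows that, after reordering the unknowns as $\vec a'$, the coefficient matrix has rank $r_0k+(M+1-r_0)K<k(M+1)$ unconditionally, so the coalition never pins down $A$ and the hypothesis $H\leqslant M$ buys nothing even inside their own framework. Your sketch instead uses $H\leqslant M$ only to keep the number of observed tag functionals below $M+1$; that is a coherent guess at where the hypothesis ``should'' enter, but it is orthogonal both to the paper's lemma and to the actual break.
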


\section{Linear Substitution/Pollution Attacks to their Scheme}
In the security analysis given by Oggier and Fathi, they focused on solving the system of linear equations on variables $a_{i,j}$ to recover the private key of other node. Actually, notice that the authenticated vectors $\vec{x}_i$ above are nearly linear on messages, so we can implement linear substitution attack to their scheme. In some papers[??], they have noticed that it is not secure to use linear authentication codes on linear network. And they considered a new type of substitution attack. Also, they pointed out that the authentication code of Oggier and Fathi is non-linear so that it should be still secure. Next, we present our linear substitution attack in details.

 Suppose the coalition of malicious verifying nodes can carry out decoding of the network coding, i.e., the coalition of their global kernels                                                         has rank not less than the                                                                                                                                                                      minimum cut of the network, for instance, the coalition of malicious verifying nodes contains one destination node. In this case, they can                                                         decode the tagged messages sent by the source node:
\[
   \vec{x}_i=[1,s_i, A_{s_i}(x)] \qquad \mbox{for $i=1,2,\cdots,n$.}
 \]
 %where $A_{s_i}(x)=P_0(x)+s_iP_1(x)+s_i^{q}P_2(x)+\cdots+s_i^{q^{M-1}}P_M(x)$ and
 For any $a_1,a_2,\cdots,a_{n}\in \f$ such that
 \[
      a_1+a_2+\cdots+a_{n}=1\ ,
 \]
 replace $\vec{x}_n$ by $\vec{x}'_n=\sum_{i=1}^{n}a_i\vec{x}_i$. Next, we show that in this way each verifying node can not notice this
 substitution attack.

 \begin{flushleft}
  \emph{Verification of Linear Substitution Attack:}
\end{flushleft}
The vector of any incoming edge at any node is of the form
 \[
       \sum_{i=1}^{n-1}\alpha_{i}\vec{x}_i+\alpha_{n}\vec{x}'_n=[\sum_{i=1}^{n}\alpha_{i},\sum_{i=1}^{n-1} \alpha_is_i+\alpha_{n}s'_n,\sum_{i=1}^{n-1} \alpha_iA_{s_i}(x)+\alpha_{n}A_{s'_n}(x)]
 \]
 for some $\alpha_1,\alpha_2,\cdots,\alpha_n\in \f$. Then
   \begin{equation*}
       \begin{array}{rl}
          & \sum_{i=1}^{M} \left(\sum_{j=1}^{n-1} \alpha_js_j+\alpha_{n}s'_n\right)^{q^{i-1}} P_{i}(x)+P_{0}(x)(\sum_{j=1}^{n} \alpha_j) \\
         = & \sum_{i=1}^{M} \left(\sum_{j=1}^{n-1} \alpha_js_j^{q^{i-1}}+\alpha_{n}s_n^{\prime q^{i-1}}\right)
         P_{i}(x)+P_{0}(x)(\sum_{j=1}^{n} \alpha_j) \\
         = &  \sum_{i=1}^{M} \left(\sum_{j=1}^{n-1} \alpha_js_j^{q^{i-1}}+\alpha_{n}\sum_{t=1}^{n} a_ts_t^{q^{i-1}}\right)
         P_{i}(x)+P_{0}(x)(\sum_{j=1}^{n} \alpha_j) \\
         = & \sum_{i=1}^{M} P_{i}(x) \left(\sum_{j=1}^{n-1} \alpha_js_j^{q^{i-1}}\right)+\alpha_{n} \sum_{i=1}^{M}P_{i}(x)\left(\sum_{t=1}^{n}
         a_ts_t^{q^{i-1}}\right)\\
         &+P_{0}(x)(\sum_{j=1}^{n} \alpha_j)
       \end{array}
   \end{equation*}
 equals to
  \begin{equation*}
     \begin{array}{rl}
         & \sum_{i=1}^{n-1} \alpha_iA_{s_i}(x)+\alpha_{n}A_{s'_n}(x) \\
       = & \sum_{i=1}^{n-1} \alpha_i\left(P_0(x)+\sum_{t=1}^Ms_i^{q^{t-1}}P_t(x)\right)\\
       &+\alpha_{n}\left((\sum_{j}^{n-1}a_j)P_0(x)+\sum_{t=1}^Ms_n^{\prime q^{t-1}}P_t(x)\right) \\
       = & (\sum_{i=1}^{n} \alpha_i)P_0(x)+\sum_{i=1}^{n-1}\sum_{t=1}^M\alpha_is_i^{q^{t-1}}P_t(x)\\
       &+\alpha_{n}\sum_{t=1}^M\left(\sum_{j=1}^{n}a_js_j^{q^{t-1}}\right)P_t(x) \\
       = & (\sum_{i=1}^{n} \alpha_i)P_0(x)+ \sum_{t=1}^M \left(\sum_{i=1}^{n-1} \alpha_is_i^{q^{t-1}}\right)P_t(x)\\
       &+\alpha_{n}\sum_{t=1}^M\left(\sum_{j=1}^{n}a_js_j^{q^{t-1}}\right)P_t(x)
     \end{array}
  \end{equation*}
   for all $x\in \fl$. In other words, it can be verified by any verifying node using his private key.

 From the above argument, we can see that any node in the network can easily make pollution to the network flow in the way that the node replaces any one or more of the vectors he received by linear combinations of his incoming vectors whose coefficients have sum $1$ and then the node processes the network coding with the new vectors.

Finally, we point out that even if Oggier and Fathi's scheme can work fruitfully, the condition $H\leqslant M$ can also be removed. Note that the condition $H\leqslant M$ is very critical in a network. The proof is similar to the proof given by Oggier and Fathi. They wrote the secret parameters $A=(a_{i,j})$ as a column vector in the order as following
\[
   \vec{a}=(a_{0,1},a_{0,2},\cdots,a_{0,k},a_{1,1},\cdots,a_{1,k},\cdots,a_{M,1},a_{M,2},\cdots,a_{M,k})^T\ ,
\]
where $G^T$ represents the transpose of the matrix $G$, and they rewrote the system of linear equations using $\vec{a}$. Then they computed the rank of the coefficient matrix, finally they concluded that under the condition $H\leqslant M$ the rank of the coefficient matrix is less than the number of variables $k(M+1)$. Actually, if we rewrite the secret parameters $A=(a_{i,j})$ as a column vector in the following order
\[
   \vec{a}'=(a_{0,1},a_{1,1},\cdots,a_{M,1},a_{0,2},\cdots,a_{M,2},\cdots,a_{0,k},a_{1,k},\cdots,a_{M,k})^T\ .
\]
Then we obtain a new system of linear equations on $a_{i,j}$ using $\vec{a}'$. In this way, we can easily show that the rank of the coefficient matrix is always less than the number of variables. So the system of linear equations does always have solutions. Next, we give the details.

Suppose a group of $K$ malicious nodes collaborate to recover $A$ and make a substitution attack. Without loss of generality, we assume that the malicious nodes are $R_1,R_2,\cdots,R_K$. Suppose the global encoding kernel at the verifying node $R_i$ is
  \begin{equation*}
    H_i=\left(
  \begin{array}{cccc}
    h_{1,1}^{(i)} & h_{1,2}^{(i)} & \cdots & h_{1,n}^{(i)} \\
     h_{2,1}^{(i)} & h_{2,2}^{(i)} & \cdots & h_{2,n}^{(i)} \\
     \vdots & \vdots & \ddots & \vdots \\
    h_{e(i),1}^{(i)} & h_{e(i),2}^{(i)} & \cdots & h_{e(i),n}^{(i)}\\
  \end{array}
\right)\ .
\end{equation*}
Each $R_i$ has some information about the secret parameter matrix $A=(a_{i,j})$:
\begin{gather*}\label{equation1}
   \left(
     \begin{array}{ccccc}
      \sum_{j=1}^n h_{1,j}^{(i)}& \sum_{j=1}^n h_{1,j}^{(i)}s_j & \sum_{j=1}^n h_{1,j}^{(i)}s_j^q & \cdots & \sum_{j=1}^n h_{1,j}^{(i)}s_j^{q^{M-1}} \\
      \sum_{j=1}^n h_{2,j}^{(i)}&  \sum_{j=1}^n h_{2,j}^{(i)}s_j & \sum_{j=1}^n h_{2,j}^{(i)}s_j^q & \cdots & \sum_{j=1}^n h_{2,j}^{(i)}s_j^{q^{M-1}}\\
      \vdots & \vdots & \vdots & \ddots & \vdots \\
      \sum_{j=1}^n h_{e(i),j}^{(i)}& \sum_{j=1}^n h_{e(i),j}^{(i)}s_j & \sum_{j=1}^n h_{e(i),j}^{(i)}s_j^q & \cdots & \sum_{j=1}^n h_{e(i),j}^{(i)}s_j^{q^{M-1}}\\
     \end{array}
   \right)\cdot A\\=
   \left(
     \begin{array}{cccc}
       \sum_{j=1}^n h_{1,j}^{(i)}L_1(s_j) & \sum_{j=1}^n h_{1,j}^{(i)}L_2(s_j) & \cdots & \sum_{j=1}^n h_{1,j}^{(i)}L_k(s_j) \\
        \sum_{j=1}^n h_{2,j}^{(i)}L_1(s_j) & \sum_{j=1}^n h_{2,j}^{(i)}L_2(s_j) & \cdots & \sum_{j=1}^n h_{2,j}^{(i)}L_k(s_j)\\
       \vdots & \vdots & \ddots & \vdots \\
       \sum_{j=1}^n h_{e(i),j}^{(i)}L_1(s_j) & \sum_{j=1}^n h_{e(i),j}^{(i)}L_2(s_j) & \cdots & \sum_{j=1}^n h_{e(i),j}^{(i)}L_k(s_j)\\
     \end{array}
   \right)
\end{gather*}
and
\begin{gather*}
    A\cdot \left(
                       \begin{array}{c}
                          1 \\
                          x_i \\
                          \vdots\\
                          x_i^{k-1} \\
                        \end{array}
                      \right)=\left(
                       \begin{array}{c}
                         P_0(x_i) \\
                         P_1(x_i) \\
                          \vdots\\
                         P_M(x_i) \\
                        \end{array}
                      \right)\ .
\end{gather*}
The group of malicious nodes combines their equations, and they get a system of linear equations
\begin{equation}\label{equation}
\left\{
  \begin{array}{c}
    \left(
                     \begin{array}{c}
                       D_1 \\
                       \vdots \\
                       D_K \\
                     \end{array}
                   \right)\cdot A=\left(
                                    \begin{array}{c}
                                      C_1 \\
                                      \vdots \\
                                      C_K \\
                                    \end{array}
                                  \right),
\\
    A\cdot \left(
  \begin{array}{cccc}
    1 & 1 & \cdots & 1 \\
     x_1^q & x_2^q & \cdots & x_K^q \\
     \vdots & \vdots & \ddots & \vdots \\
    x_1^{q^{k-1}} & x_2^{q^{k-1}} & \cdots & x_K^{q^{k-1}}\\
  \end{array}
\right)=\left(
  \begin{array}{cccc}
    P_0(x_1) & P_0(x_2) & \cdots & P_0(x_K) \\
     P_1(x_1) &  P_1(x_2) & \cdots &  P_1(x_K) \\
     \vdots & \vdots & \ddots & \vdots \\
     P_M(x_1) &  P_M(x_2) & \cdots &  P_M(x_K)\\
  \end{array}
\right)\ ,
  \end{array}
\right.
\end{equation}
where
\begin{equation*}
    D_i=\left(
     \begin{array}{ccccc}
      \sum_{j=1}^n h_{1,j}^{(i)}& \sum_{j=1}^n h_{1,j}^{(i)}s_j & \sum_{j=1}^n h_{1,j}^{(i)}s_j^q & \cdots & \sum_{j=1}^n h_{1,j}^{(i)}s_j^{q^{M-1}} \\
       \sum_{j=1}^n h_{2,j}^{(i)}&  \sum_{j=1}^n h_{2,j}^{(i)}s_j & \sum_{j=1}^n h_{2,j}^{(i)}s_j^q & \cdots & \sum_{j=1}^n h_{2,j}^{(i)}\vec{s}_j^{q^{M-1}}\\
      \vdots & \vdots & \vdots & \ddots & \vdots \\
      \sum_{j=1}^n h_{e(i),j}^{(i)}&  \sum_{j=1}^n h_{e(i),j}^{(i)}s_j & \sum_{j=1}^n h_{e(i),j}^{(i)}s_j^q & \cdots & \sum_{j=1}^n h_{e(i),j}^{(i)}s_j^{q^{M-1}}\\
     \end{array}
   \right)
\end{equation*}
and
\begin{equation*}
C_i= \left(
     \begin{array}{cccc}
       \sum_{j=1}^n h_{1,j}^{(i)}L_1(s_j) & \sum_{j=1}^n h_{1,j}^{(i)}L_2(s_j) & \cdots & \sum_{j=1}^n h_{1,j}^{(i)}L_k(s_j) \\
        \sum_{j=1}^n h_{2,j}^{(i)}L_1(s_j) & \sum_{j=1}^n h_{2,j}^{(i)}L_2(s_j) & \cdots & \sum_{j=1}^n h_{2,j}^{(i)}L_k(s_j)\\
       \vdots & \vdots & \ddots & \vdots \\
       \sum_{j=1}^n h_{e(i),j}^{(i)}L_1(s_j) & \sum_{j=1}^n h_{e(i),j}^{(i)}L_2(s_j) & \cdots & \sum_{j=1}^n h_{e(i),j}^{(i)}L_k(s_j)\\
     \end{array}
   \right)\ .
\end{equation*}
Denote
\begin{equation*}
 S_n=   \left(
      \begin{array}{ccccc}
        1 & s_1 & s_1^{q} & \cdots & s_1^{q^{M-1}} \\
        1 & s_2 & s_2^{q} & \cdots & s_2^{q^{M-1}} \\
        \vdots & \vdots & \vdots & \ddots & \vdots \\
         1 & s_n & s_n^{q} & \cdots & s_n^{q^{M-1}} \\
      \end{array}
    \right) \ .
\end{equation*} Then
\[
  D_i=H_i\cdot S_n\ .
\]

\begin{lem}\label{keylem}
If $K\leqslant k-1$, then there exists exact $q^{l(M+1-r_0)(k-K)}$ matrices $A$ satisfying the system of equations (\ref{equation}), where
  \begin{equation*}
  r_0=\mathrm{rank}\left(\left(
      \begin{array}{c}
        H_{1}S_n \\
        H_{2}S_n \\
        \vdots \\
        H_{K}S_n \\
      \end{array}
    \right)\right)\ .
 \end{equation*}
\end{lem}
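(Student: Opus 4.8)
The plan is to read the system~(\ref{equation}) as the conjunction of a \emph{left}-multiplication constraint on $A$ and a \emph{right}-multiplication constraint on $A$, and to exploit the fact that these act independently on the columns, respectively the rows, of $A$. Write the first block as $\mathcal{D}A=\mathcal{C}$, where $\mathcal{D}=(D_1^T,\dots,D_K^T)^T$ and $\mathcal{C}=(C_1^T,\dots,C_K^T)^T$; since $D_i=H_iS_n$, the hypothesis of the lemma is precisely that $\mathrm{rank}_{\fl}(\mathcal{D})=r_0$. Write the second block as $A\cdot W=\Pi$, where $W$ is the $k\times K$ matrix occurring there and $\Pi=\big(P_i(x_j)\big)$; $W$ is a Vandermonde-type matrix in the distinct values $x_1,\dots,x_K$, hence of rank $K$ because $K\leqslant k-1<k$. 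The preliminary remark I will use for consistency is that the genuine secret matrix $A$ distributed by the trusted authority satisfies \emph{both} blocks — the $C_i$ and the columns of $\Pi$ are by construction exactly what that $A$ produces — so each block is a consistent linear system, the solution set of~(\ref{equation}) is a nonempty affine $\fl$-subspace of $\fl^{(M+1)\times k}$, and it remains only to compute its dimension.

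First I would handle the left block alone. Since the $j$-th column of $\mathcal{D}A$ equals $\mathcal{D}$ times the $j$-th column of $A$, we have $\mathcal{D}A=0$ if and only if every column of $A$ lies in $U:=\ker_{\fl}\mathcal{D}$, a subspace of $\fl$-dimension $(M+1)-r_0$. Hence the solutions of $\mathcal{D}A=\mathcal{C}$ form a coset $A_0+\mathcal{K}$, with $A_0$ any particular solution and $\mathcal{K}=\{B\in\fl^{(M+1)\times k}:\text{every column of }B\text{ lies in }U\}$. Fixing an $(M+1)\times\big((M+1)-r_0\big)$ matrix $\Phi$ whose columns form a basis of $U$, the rule $B=\Phi T$ sets up an $\fl$-linear bijection between $\mathcal{K}$ and the space $\fl^{((M+1)-r_0)\times k}$ of matrices $T$.

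Next I would bring in the right block. A solution $A=A_0+B$ of the first block satisfies $AW=\Pi$ if and only if $BW=\Pi':=\Pi-A_0W$. The columns of $\Pi'=B^{\ast}W$ (where $B^{\ast}\in\mathcal{K}$ is the genuine key's copy of $B$) lie in the column span of $B^{\ast}$, hence in $U$, so $\Pi'=\Phi Q$ for a unique $Q\in\fl^{((M+1)-r_0)\times K}$; as $\Phi$ is injective, $\Phi TW=\Phi Q$ is equivalent to $TW=Q$. Finally $TW=Q$ decouples over the $(M+1)-r_0$ rows of $T$: a row $\tau\in\fl^{1\times k}$ must satisfy $\tau W=(\text{the corresponding row of }Q)$, and since $\mathrm{rank}(W)=K$ the map $\tau\mapsto\tau W$ is a surjection onto $\fl^{1\times K}$ with kernel of $\fl$-dimension $k-K$. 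Thus each row admits exactly $q^{l(k-K)}$ values, there are $\big(q^{l(k-K)}\big)^{(M+1)-r_0}=q^{l(M+1-r_0)(k-K)}$ matrices $T$, and hence the same number of matrices $A$ solving~(\ref{equation}).

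The computation is, at bottom, a pair of elementary rank counts, made possible by the observation that one block of~(\ref{equation}) constrains $A$ ``on the left'' and the other ``on the right'', so they may be imposed one after the other. The point needing the most care is the reduction to $TW=Q$: one must verify that $\Pi'$ genuinely lies in the column span of $\Phi$, and this is exactly where the existence of the true key — equivalently, the global consistency of~(\ref{equation}) — is used; a secondary routine point is confirming $\mathrm{rank}(W)=K$, which is just the Vandermonde determinant together with $K\leqslant k-1$.
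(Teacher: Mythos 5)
Your proof is correct, and it takes a genuinely different route from the paper's. The paper vectorizes $A$ into a single column of $k(M+1)$ unknowns, assembles the full coefficient matrix of system (\ref{equ2}), and asserts that its rank equals $r_0k+(M+1-r_0)K$; the justification offered for that rank --- that the rows of the stacked $H_iS_n$ blocks are ``contained in the space generated by $x_i^jI_{M+1}$'' --- is terse to the point of being garbled, and the consistency of the system (needed to pass from a rank to a nonzero solution count) is never addressed there. You instead keep $A$ as a matrix and impose the two blocks sequentially: the left block $\mathcal{D}A=\mathcal{C}$ acts columnwise and cuts out a coset of $\mathcal{K}$ of dimension $k\bigl((M+1)-r_0\bigr)$, and the right block $AW=\Pi$ then acts rowwise on the parametrizing matrix $T$, each of whose $(M+1)-r_0$ rows contributes a fibre of dimension $k-K$ because $\mathrm{rank}(W)=K$. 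This buys you two things the paper's argument lacks: an explicit consistency argument (the true key witnesses solvability at each stage, which is exactly what guarantees that $\Pi'$ lies in the column span of $\Phi$), and, as a byproduct via rank--nullity, a clean derivation of the rank $r_0k+(M+1-r_0)K$ that the paper merely asserts. One cosmetic caveat: the right-hand block of system (\ref{equation}) is printed with entries $x_i^{q^{j}}$, while the per-node equations and the paper's own proof of the lemma use $x_i^{j}$; your identification of $W$ as a Vandermonde matrix of rank $K$ follows the latter, evidently intended, reading.
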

\begin{proof}
Recall the system~(\ref{equation})
\begin{equation*}
\left\{
  \begin{array}{l}
    \left(
                     \begin{array}{c}
                       H_1S_n \\
                       \vdots \\
                       H_KS_n \\
                     \end{array}
                   \right)\cdot A=\left(
                                    \begin{array}{c}
                                      C_1 \\
                                      \vdots \\
                                      C_K \\
                                    \end{array}
                                  \right),
\\
    A\cdot  \left(
  \begin{array}{cccc}
    1 & 1 & \cdots & 1 \\
     x_1^q & x_2^q & \cdots & x_K^q \\
     \vdots & \vdots & \ddots & \vdots \\
    x_1^{q^{k-1}} & x_2^{q^{k-1}} & \cdots & x_K^{q^{k-1}}\\
  \end{array}
\right)=\left(
  \begin{array}{ccc}
    P_0(x_1)  & \cdots & P_0(x_K) \\
     P_1(x_1) & \cdots &  P_1(x_K) \\
     \vdots &  \ddots & \vdots \\
     P_M(x_1)  & \cdots &  P_M(x_K)\\
  \end{array}
\right)\ .
  \end{array}
\right.
\end{equation*}
Rewrite the matrix $A$ of variables as a single column of $k(M+1)$ variables. Then the system~(\ref{equation}) becomes
 \begin{equation}\label{equ2}
    \left(
      \begin{array}{cccc}
        H_1S_n & 0 & 0 & 0 \\
        0 & H_1S_n & 0 & 0 \\
        0 & 0 & \ddots & 0 \\
        0 & 0 & 0 & H_1S_n \\
        \vdots & \vdots & \ddots & \vdots \\
         H_KS_n & 0 & 0 & 0 \\
         0 & H_KS_n & 0 &0  \\
         0 & 0 & \ddots & 0 \\
         0 & 0 & 0 & H_KS_n \\
        I_{M+1} & x_1 I_{M+1} & \cdots & x_1^{k-1}I_{M+1} \\
        I_{M+1} & x_2I_{M+1} & \cdots & x_2^{k-1}I_{M+1} \\
        \vdots & \vdots & \ddots & \vdots \\
        I_{M+1} & x_K I_{M+1} & \cdots & x_K^{k-1}I_{M+1} \\
      \end{array}
    \right)\cdot
    \left(
      \begin{array}{c}
        a_{0,1} \\
        a_{1,1} \\
        \vdots \\
        a_{M,1} \\
        a_{0,2}\\
        a_{1,2}\\
        \vdots \\
        a_{M,2}\\
        \vdots \\
         a_{0,k}\\
         a_{1,k}\\
         \vdots \\
         a_{M,k}\\
      \end{array}
    \right)=T
    \end{equation}
where $I_{M+1}$ is the identity matrix with rank ($M+1$) and $T$ is the column vector of the constant terms in system (\ref{equation}) with proper order. Notice that
\begin{equation*}
   r_0= \mathrm{rank}\left(\left(
                         \begin{array}{c}
                           H_1S_n \\
                           H_2S_n \\
                           \vdots \\
                           H_KS_n \\
                         \end{array}
                       \right)\right) =  \mathrm{rank}\left(\left(
                                                       \begin{array}{c}
                                                        H_1 \\
                                                        H_2 \\
                                                        \vdots \\
                                                        H_K \\
                                                       \end{array}
                                                     \right) \cdot S_n \right)
                                                     \leqslant \min\left\{\mathrm{rank}\left(
                                                                                               \begin{array}{c}
                                                                                                H_1 \\
                                                                                                H_2 \\
                                                                                                \vdots \\
                                                                                                H_K \\
                                                                                               \end{array}
                                                                                               \right), n  \right\} \ .
      \end{equation*}
 Also note that rows of
 \begin{equation*}
  \left(
      \begin{array}{c}
        H_1S_n \\
        H_2S_n \\
        \vdots \\
        H_KS_n \\
      \end{array}
    \right)
 \end{equation*}
  is contained in the space $\f^{M+1}$ generated by $x_i^jI_{M+1}$ if $x_i\neq 0$.
  So the rank of the coefficient matrix of System~(\ref{equ2}) to
  \[
      r_0k+(M+1-r_0)K
  \]
which is less than the number of variables $k(M+1)$. So the system (\ref{equ2}) has
$$q^{l(k(M+1)-( r_0k+(M+1-r_0)K))}=q^{l(M+1-r_0)(k-K)}$$
solutions, i.e., the system (\ref{equation}) has $q^{l(M+1-r_0)(k-K)}$ solutions.
\end{proof}

\section{Conclusion}
In this paper, we discuss the security of the authentication code given by Oggier and Fathi and show our linear attack to their scheme, although it looks like non-linear. So we point out that as the technique of linear network develops very fast, and it has invaded a lot in our daily life, such as Internet TV, wireless networks, content distribution networks, P2P networks and distributed file system, to give an efficient and unconditional secure authentication code for linear network against the original substitution/pollution attack considered by Oggier and Fathi is extremely urgent.

 %------------reference---------------------------------

\bibliographystyle{ieeetran}
\bibliography{authentication}

 \end{document}